\documentclass[12pt,twoside]{article}

\usepackage{amsmath,amsfonts,amssymb,amsthm,graphicx}     
\usepackage{color}

\textwidth=155 mm
 \textheight=210 mm
 \voffset=-5mm

\pagestyle{myheadings}

 \evensidemargin=0.5 cm

\newcommand{\R}{\mathbb{R}}

\newcommand{\E}{\mathbb{E}}

\newcommand{\Z}{\mathbb{Z}}

\newcommand{\Pro}{\mathbf{P}}
\newcommand{\Qb}{\mathbf{Q}}
\newcommand{\NN}{\mathbf{N}}

\newtheorem{prop}{Proposition}[section]
\newtheorem{Proposition}{Proposition}[section]

\newtheorem{definition}{Definition}

\def\b{\mathbb }

\def\rr{\b R}
\def\nn{\b N}

\def\zz{\b Z}

\def\11{{\bf 1}}

\def\ce{{\cal E}}

\def\cf{{\cal F}}

\def \E{ \mathbb E}

\def \R{\mathbb R}

\def \1{\mathbf{1}}

\begin{document}

\begin{center}
{\Large{\textbf{\hspace{-.9cm} \textrm{Jump-diffusion modeling in emission markets}}}}
\end{center}

\begin{center}
\textsc{K. Borovkov\footnote{Department of Mathematics and Statistics, University
Melbourne, Parkville 3010, Australia.}, G. Decrouez$^1$ and J. Hinz\footnote{National
University of Singapore, Department of Mathematics, 2 Science Drive, 117543
Singapore.}}
\end{center}

\renewcommand{\thefootnote}{}
\footnotetext{Research supported by the ARC Discovery Grant DP088069, the Start-up grant R-146-000-107-133 of the National University of Singapore and the research grants WBS R-703-000-020-720
/ C703000 of the Risk Management Institute at the National University of Singapore.}


\begin{abstract}
Mandatory emission trading schemes are being established around the world.
Participants of such market schemes are always exposed to risks. This leads to the
creation of an accompanying market for emission-linked derivatives. To evaluate the
fair prices of such financial products, one needs appropriate models for the evolution
of the underlying assets, emission allowance certificates. In this paper, we discuss
continuous time diffusion and jump-diffusion models, the latter enabling one to model
information shocks that cause jumps in allowance prices. We show that the  resulting
martingale dynamics can be described in terms of non-linear partial differential and
integro-differential equations and use  a finite difference method to investigate
numerical  properties of their discretizations. The results are illustrated by a small numerical study.%

\bigskip

\noindent {\em  Keywords:} stochastic modeling for emission trading, environmental
finance, risk-neutral pricing, market equilibrium, jump-diffusion models.

\medskip

\noindent {\em ASM Subject Classifications:} Primary: 91B70; Secondary: 91B60, 91B76,
93E20.

\end{abstract}

\section{Introduction}

Emission Trading Schemes (ETSs) have recently been designed with the main aim to reduce emissions of greenhouse gases  and other pollutants. Two examples of ETSs are the EU ETS (European Union Emission Trading Scheme) and the US Sulfur Dioxide Trading System. In such schemes, the regulator allocates each market participant a number of credits, each of which gives the holder the right to emit a specified amount of pollutant (e.g.\ CO$_2$). At the end of a compliance period, each agent should not have released more pollutant than covered by the credits it holds at that time, or it will have to pay a fine proportional to the excess of the pollutant amount over the one corresponding to the credits held. During the compliance period, however, each agent can trade credits with other market participants, depending on whether it is cheaper to reduce emission or to buy credits.

The present study relies on a detailed mathematical model for such a scheme based on market equilibrium. We exploit its implications for the risk neutral allowance price evolution.
The key issue is a  feedback relationship between allowance prices  and  pollution abatement activity. Namely,
any increase in allowance
price  enforces emission reduction since agents would then tend to  sell their allowances. Hence the increase of the  allowance price encourages supply of certificates  and lowers the probability of non-compliance, which,
in its turn, tends to diminish allowance  prices. Following \cite{HinzNovikov}, we describe
this feedback relationship in terms of  market fundamentals, using a stylized
evolution of the expected  non-compliance  and  emission savings costs.
We pay particular attention to discontinuities in the information flow, assuming
that some events may cause market participants to  change their views on the future reduction volumes
required to reach compliance. We show how
risk neutral  allowance price reacts to such  ``information discontinuity" and  study this effect in
derivatives valuation.

The paper is organized as follows. In Section \ref{Model}, we extend the discrete-time framework from
\cite{HinzNovikov} to the continuous-time setting. In Section \ref{ContinuousTime}, we interpret our results  in the context of diffusion processes, derive the corresponding partial differential equations and discuss  option pricing.
Section \ref{Jumps} is devoted to modeling shock  events  in the information flow. Our description is based on
jump-diffusion processes  and requires solving a nonlinear partial integro-differential equation. We show that solutions to these non-linear equations from Section 3 and 4 satisfy the maximum principle. In the last section we propose a numerical implementation of its solution. In particular, we prove that the discretized problem possesses a unique solution and satisfies the maximum principle.

The suggested allowance  price model is suitable for option pricing. It turns out that the fair price dynamics of European options written on emission allowance prices can be obtained by solving a linear partial integro-differential equation. We illustrate our results by numerical examples.

The literature on this subject is rather extensive, and we refer the interested reader to
a nice expository work \cite{Taschini2009} which reviews the fundamental concepts of the environmental economics
and provides a valuable guide to publications, which, however, is far from being complete.
The {\it economic theory} of allowance trading  goes back to \cite{Dales} and \cite{Montgomery}, where the authors proposed trading  the public good {\it environment} by means of transferable permits.
Important results in {\it dynamic allowance trading} were obtained in  \cite{Cronshaw}, \cite{Tietenberg2}, \cite{Rubin}, \cite{Leiby}, \cite{Schennach}, \cite{StevensRose}, \cite{Maeda} and in the literature cited therein.
Recently, after the introduction of the real-world emission market EU ETS, the
{\it empirical evidence} has become available. The experience gained from the operation of Phase I of the EU ETS is discussed in \cite{DaskalakisPsychoyiosMarkellos}, and a detailed  analysis of spot and futures allowance prices
from  this market is given   in \cite{UhrigHomburgWagner1} and
\cite{UhrigHomburgWagner2}. 
The contributions \cite{BenzTrueck} and \cite{PaolellaTaschini} are devoted to {\it econometric modeling}
of emission allowance prices. Beyond confirming stylized facts of financial time
 series for prices of emission allowances,    Markov switch  and AR-GARCH
models are suggested. The modeling of
{\it dynamic price equilibrium} is addressed in
\cite{CarmonaFehrHinz} and \cite{CarmonaFehrHinzPorchet}, which provide a  mathematical analysis of the market equilibrium
and use optimal stochastic control theory to show  social optimality of emission trading schemes.
A recent work  \cite{HinzNovikov} considers equilibrium of risk averse market players and elaborates on
 risk neutral dynamics. The problems of
{\it derivatives valuation} in emission markets are also addressed.  The paper \cite{ChesneyTaschini} discusses an endogenous emission
permit price dynamics within equilibrium setting and elaborates on the valuation of  European options on emission allowances.
The dissertation \cite{Wagner} and the  paper \cite{Seifert} deal with the risk-neutral allowance price formation
within the EU ETS. In that paper, when utilizing  equilibrium properties, the  price evolution is
treated in terms of  marginal abatement costs and optimal stochastic control.
The work \cite{CetinVerschuere} is also devoted to option pricing within EU ETS.
The authors suppose that the drift of allowance spot prices is related to a hidden variable,
which describes the overall market position in allowance contracts, and make use of
filtering techniques to derive option price formulas which reflect specific allowance
banking regulations valid in the EU ETS. Finally,  the recent work \cite{CarmonaHinz} presents an  approach
where emission certificate futures are modeled in terms of a
deterministic time change applied to a certain class of  interval-valued diffusion processes.

\section{Modeling emission markets in discrete time}
\label{Model}

During the compliance period (typically 3--5 years), each ETS participant dynamically adjusts its
production processes (and hence its emissions) and trades
emission credits at times $t=0,1,2,\hdots,T<\infty$ to maximize its revenue. In this setting,
allowance price reaches  its equilibrium determined by supply and demand of emission allowances. In what follows,
we base on the model from \cite{HinzNovikov} which characterizes  the equilibrium allowance prices in terms of non-compliance uncertainty and abatement costs. This characterization forms the starting point for our analysis.

Let $\left(\Omega, \cf, \Pro, \{\cf_{t}\}_{t=0}^{T}\right)$ be  a filtered probability space. We assume that all the processes considered in this section are adapted to $\{\cf_{t}\}_{t=0}^{T}$.
Consider a market with a finite set $I$ of agents who must comply with the ETS rules.
Assume that, for  each
$i \in I$, an exogenously given  stochastic process $\{E^{i}_{t}\}_{t=0}^{T-1}$ describes the so-called
``business as usual" emission of agent $i$. That is,
$E^{i}_{t}= E^{i}_{t}(\omega)$ stands for the total pollution of the agent $i$ which will be emitted during the time
interval $(t, t+1]$ if no abatement measures are applied by the agent. Suppose that each agent $i$
can decide, at any time $t=0,  \dots, T-1$, to perform  a reduction of  $\xi^{i}_{t}$  pollutant units to be
emitted during  $(t, t+1]$.
 The cost of abatement is modeled by a function of the reduced volume that
can be random (reflecting the uncertainty in fuel  prices). Thus, if at time $t=0, \dots, T-1$ agent $i$ decides on reduction  by  $x \in [0, \infty)$  units for the time interval $(t,t+1]$, then
it costs it $C^{i}_{t}(x)=C^{i}_{t}(x)(\omega)$, where $C^{i}_{t} : [0, \infty) \times \Omega \mapsto [0,
\infty)$ is ${\cal B}([0, \infty))\otimes\cf_{t}$-measurable, and for each $\omega \in \Omega$,
the mapping $x \mapsto C^{i}_{t} (x)(\omega)$  is strictly convex and continuous with $C_{t}(0)=0$.
Therefore, following this abatement policy  $\{\xi^{i}_{t}\}_{t=0}^{T-1}$, agent $i$ will have born
by the compliance date $T$ the total abatement costs of
\begin{equation}
\sum_{t=0}^{T-1}C^{i}_{t}(\xi^{i}_{t}). \label{abatement}
\end{equation}
For each $i  \in I$, $\omega \in \Omega$, $t=0, \dots, T-1$ and $a \in [0, \infty)$, we introduce
the ``reduction volume"
\begin{equation} \label{volumefunction}
r_{t}^{i}(a)=r_{t}^{i}(a)(\omega):={\rm argmax}\left\{ax-C_{t}^{i}(x)(\omega) \, : \, x \in [0, E^{i}_{t}(\omega)]\right\}.
\end{equation}
This quantity gives the ``locally optimal" reduction volume for agent $i$ for the time period $(t,t+1]$ given that the price of one allowance unit is equal to $a$ for that time period; we may  assume (and  Proposition  \ref{mainprop} (b) below supports this) that, being rational, agent $i$ will implement emission reduction at that level. As is well known (see e.g. \cite{HinzNovikov}), under the above assumptions $r_{t}^{i}(a)(\omega)$ is a non-decreasing and continuous function of $a\in[0,\infty)$ for each $\omega \in \Omega$ and $t=0, \dots, T-1$.
Next denote the total $t$-th time period reduction by
 \begin{equation} \label{cfunt}
  r_{t}(a):=\sum_{i \in I}r^{i}_{t}(a), \qquad a \in [0, \infty),
 \end{equation}
which represents the total reduction by all agents in the market for the time period $(t,t+1]$, given the time $t$ price of one allowance unit is $a$.

Suppose that, at any time $t=0, \dots, T$, credits can be traded at the  spot price $A_{t}$.
Denote by $\vartheta^{i}_{t}$  the change in the allowance number held by agent $i$ at time $t$. Then, given the allowance prices $\{A_{t}\}_{t=0}^{T}$, the position changes $\{\vartheta^{i}_{t}\}_{t=0}^{T}$ will result for agent $i$ in the total trading costs of
\begin{equation}
\sum_{t=0}^{T}\vartheta^{i}_{t}A_{t}. \label{trading}
\end{equation}
Here and in what follows, for simplicity's sake we assume, zero interest rates (or, equivalently, that all the prices are already discounted).

Further, the total pollution of agent $i$ during the compliance interval $[0,T]$ can be expressed as the cumulative business-as-usual emission less the agent's total reduction:
$$
\sum_{t=0}^{T-1}E^{i}_{t} - \sum_{t=0}^{T-1}\xi^{i}_{t}.
$$
Denoting by $\gamma_i$ the agent's initial allowance allocation, we observe that $i$ will hold $\gamma_i+\sum_{t=0}^{T}\vartheta^{i}_{t}$ allowances at time $T$ and hence its loss resulting from the potential penalty payment at rate $\pi$ (the penalty for the amount of emissions corresponding to one allowance) is
\begin{eqnarray}
\pi\left[\sum_{t=0}^{T-1}(E^{i}_{t}-\xi^{i}_{t}-\vartheta^{i}_{t})-\gamma^{i}- \vartheta^{i}_{T}\right]^{+}. \label{penaltypayment}
\end{eqnarray}

Finally, we define the space of feasible trading strategies $\vartheta^{i}=\{\vartheta^{i}_{t}\}_{t=0}^{T}$
 and abatement strategies $\xi^{i}=\{\xi^{i}_{t}\}_{t=0}^{T-1}$ of agent $i \in I$ as
\begin{eqnarray*}
{\cal U}^{i}:=\{ (\vartheta^{i}, \xi^{i}) \, : \,
0 \le \xi^{i}_{t} \le E^{i}_{t}, \quad t =0, \dots, T-1\}. \label{admiss}
\end{eqnarray*}
In view of (\ref{abatement}), (\ref{trading}) and (\ref{penaltypayment}), the total revenue of agent $i$ following an admissible policy $(\vartheta^{i}, \xi^{i}) \in {\cal U}^{i}$ is equal to
\begin{equation*}  \label{totrev}
L^{A,i}(\vartheta^{i}, \xi^{i}):=-\sum_{t=0}^{T-1}(\vartheta^{i}_{t}A_{t}+ C^{i}(\xi^{i}_{t})) -\vartheta^{i}_{T}A_{T}
- \pi\left[\sum_{t=0}^{T-1}(E^{i}_{t}-\xi^{i}_{t}-\vartheta^{i}_{t}) - \gamma^{i}-\vartheta^{i}_{T}\right]^{+}.
\end{equation*}

To specify risk preferences, we describe agents' risk attitudes by individual utility functions $U^{i}$, $i\in I$, that are assumed to be continuous strictly increasing and concave.
For a random variable $X$, consider the utility functional
 $$
 X \mapsto u^{i}(X)=\E(U^{i}(X)),
 $$
which is defined whenever the expectation exists and is not $-\infty$.
Given an allowance price process $A=\{A_{t}\}_{t=0}^{T}$, agent $i$ behaves rationally
in the sense that it maximizes its utility from the terminal wealth
$$(\vartheta^{i}, \xi^{i}) \mapsto u^{i}(L^{A,i}(\vartheta^{i}, \xi^{i}))$$  by  an appropriate choice
of the strategy that we denote by $(\vartheta^{i*}, \xi^{i*})$.
Following the standard apprehension,  a realistic market state is described by the so-called equilibrium ---
a situation where  the allowance price, positions and abatement measures
are such that each agent is satisfied with the own strategy and,
at the same time, the natural  restrictions are met.
\begin{definition}
\label{de:equilibrium}
An adapted process $A^{*}=\{A^{*}_{t}\}_{t=0}^{T}$
is called an {\rm{equilibrium  allowance price process}} if, for each $i \in  I$,
there is a strategy $(\vartheta^{*i}, \xi^{*i}) \in {\cal U}^{i} $
such that we have
$u^{i}(L^{A^{*}, i} (\vartheta^{*i}, \xi^{*i}))<\infty$  and
\begin{itemize}
\item[{\em(i)}] the cumulative changes in  positions are  in zero  net supply:
\begin{equation*} \label{eqold}
\sum_{i \in I}\vartheta_{t}^{*i}=0 \  \hbox{for all $t=0, \dots, T$,}
\end{equation*}
\item[{\em(ii)}] each agent $i \in  I $ is satisfied with its own strategy in
the sense that, for each $(\vartheta^{i},  \xi^{i}) \in  {\cal U}^{i}$
such that $u^{i}\left(L^{A^{*}, i}
(\vartheta^{i},  \xi^{i})
\right)$ exists, one has
\begin{equation*}
u^{i}\left(L^{A^{*}, i} (\vartheta^{*i}, \xi^{*i}) \right)\ge u^{i}\left(L^{A^{*}, i}
(\vartheta^{i},  \xi^{i})
\right).
\end{equation*}
\end{itemize}
\end{definition}
 In \cite{HinzNovikov}, this equilibrium notion was used to establish
  a  reduced-form model which describes the allowance price evolution
from the risk-neutral perspective. This approach utilizes the following three properties
of the above equilibrium.
    \begin{itemize} \item[(a)]
  There is no  arbitrage since any profitable strategy would immediately be  followed by all agents.
  \item[(b)]
   Given a technology
    with  lower  reduction costs than the present allowance price, it is optimal to immediately  reduce
     one's pollution and take profit from selling allowances.
    \item[(c)]
   There are only two final outcomes for allowance price:
 at maturity, either the price will vanish if there is an excess in allowances, or, in the
 case of their shortage, the price will rise to the penalty level.
 Exact coincidences  of allowance demand and supply at maturity
  occur with zero probability under broad assumptions and can be neglected.
   \end{itemize}

The last property follows from the assumption that the random variable $\sum_i E_T^i$ representing the total emissions at maturity has a continuous distribution, given the information up to time $T-1$.
Under mild additional assumptions, \cite{HinzNovikov} claims
that the  above assertions can be deduced from the equilibrium in the following form.
\begin{prop} \label{mainprop}
Suppose that  $\{A^{*}_{t}\}_{t=0}^{T}$ is an equilibrium allowance price process
and $\{\xi^{*i}_{t}\}_{t=0}^{T-1}$, $i \in I$, are corresponding equilibrium abatement strategies.
\begin{itemize}
\item[{\em(a)}] There exists a measure $\Qb$ on $(\Omega, \cf)$ which is equivalent to $\Pro$ and such that  $\{A^{*}_{t}\}_{t=0}^{T}$ is a $\Qb$-martingale.
\item[{\em(b)}]  For each  $i \in I$ one has
\begin{equation}
\xi^{*i}_{t}= r_{t}^{i}(A^{*}_{t}), \quad t=0, \dots, T-1, \enspace  \label{marginalcosts}
\end{equation}
with the reduction functions $r_{t}^{i}$, $t=0, \dots, T-1$, from {\em(\ref{volumefunction})}.
\item[{\em(c)}]   The terminal value of the allowance price is given by
\begin{equation} \label{termialprice}
A^{*}_{T}=\pi\mathbf{1}\left( \sum_{i \in I}\left(\sum_{t=0}^{T-1}(E^{i}_{t}-\xi^{*i}_{t})- \gamma^{i}\right)  \ge  0 \right).
\end{equation}
where $\mathbf{1}(B)$ is the indicator of the event $B$.
\end{itemize}
\end{prop}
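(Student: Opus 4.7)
The plan is to prove (a), (b), (c) in this order through pathwise perturbations of individual agents' strategies around the equilibrium.

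For (a), I would first derive absence of arbitrage from the equilibrium and then invoke the discrete-time fundamental theorem of asset pricing (Dalang--Morton--Willinger). If a self-financing trading strategy $\vartheta^{0}$ had zero cost and a nontrivial nonnegative terminal payoff, every agent could superimpose it on their equilibrium trades, leaving the penalty term unchanged while strictly increasing the trading P\&L pathwise; strict monotonicity of $U^{i}$ would then yield a strictly larger $u^{i}$, contradicting condition (ii) of Definition~\ref{de:equilibrium}. The measure $\Qb$ equivalent to $\Pro$ making $\{A_{t}^{*}\}$ a martingale then follows from FTAP.

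For (b), fix $i\in I$ and $t\le T-1$, and for any $\cf_{t}$-measurable $y$ with $0\le y\le E_{t}^{i}$ perturb the equilibrium by replacing $\xi_{t}^{*i}$ with $y$ and simultaneously setting $\vartheta_{t}^{i} := \vartheta_{t}^{*i}+(\xi_{t}^{*i}-y)$, keeping all other coordinates intact. Then $-\xi_{t}^{i}-\vartheta_{t}^{i}$ inside the penalty bracket is unchanged, so the penalty term is invariant, and the net change in $L^{A^{*},i}$ reduces to
\[
[A_{t}^{*} y - C_{t}^{i}(y)] - [A_{t}^{*} \xi_{t}^{*i} - C_{t}^{i}(\xi_{t}^{*i})].
\]
By strict convexity of $C_{t}^{i}$ this is uniquely maximized over $y\in[0,E_{t}^{i}]$ at $y=r_{t}^{i}(A_{t}^{*})$; optimality of $\xi_{t}^{*i}$ and strict monotonicity of $U^{i}$ therefore force $\Pro(\xi_{t}^{*i}\neq r_{t}^{i}(A_{t}^{*}))=0$.

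Part (c) is the most delicate step. Setting
\[
N^{i} := \sum_{t=0}^{T-1}(E^{i}_{t}-\xi^{*i}_{t}-\vartheta^{*i}_{t})-\gamma^{i}- \vartheta^{*i}_{T},
\]
perturb only $\vartheta_{T}^{*i}$ by $\delta\in\R$; the change in $L^{A^{*},i}$ becomes $-\delta A_{T}^{*} - \pi([N^{i}-\delta]^{+}-[N^{i}]^{+})$. Analysing this piecewise linear function of $\delta$ and imposing optimality at $\delta=0$ gives $0\le A_{T}^{*}\le\pi$ a.s., and on $\{0<A_{T}^{*}<\pi\}$ the unique optimum forces $N^{i}=0$ for every agent; summing over $i$ with $\sum_{i}\vartheta_{t}^{*i}=0$ yields
\[
S := \sum_{i\in I}\Big(\sum_{t=0}^{T-1}(E_{t}^{i}-\xi_{t}^{*i}) - \gamma^{i}\Big)=0,
\]
a $\Pro$-null event by the continuity assumption on $\sum_{i} E_{T}^{i}$. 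Hence $A_{T}^{*}\in\{0,\pi\}$ a.s. The same one-sided perturbation shows that $A_{T}^{*}>0$ forces $N^{i}\ge 0$ for every $i$, while $A_{T}^{*}<\pi$ forces $N^{i}\le 0$; summing over $i$ gives $\{S>0\}\subseteq\{A_{T}^{*}=\pi\}$ and $\{S<0\}\subseteq\{A_{T}^{*}=0\}$ up to null sets, which is exactly \eqref{termialprice}. The principal obstacle lies in combining these sub-cases cleanly and in invoking the continuity hypothesis to eliminate $\{S=0\}$; parts (a) and (b) are comparatively routine once the perturbation device is in place.
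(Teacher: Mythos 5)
The paper itself contains no proof of Proposition~\ref{mainprop}: it is imported from the reference [Hinz--Novikov], with only the informal bullet points (a)--(c) preceding the statement offered as justification. So there is no in-paper argument to compare against; what you have written is a plausible reconstruction of the cited proof, and it tracks those bullet points closely (no arbitrage $\Rightarrow$ martingale measure; local optimality of abatement; digital terminal value). Your perturbation device is the right one and the case analysis in (c) is correct: concavity of $\delta\mapsto-\delta A^{*}_{T}-\pi[N^{i}-\delta]^{+}$ gives $0\leqslant A^{*}_{T}\leqslant\pi$, the interior case forces $N^{i}=0$ for all $i$ and hence $S=0$ after using zero net supply, and the one-sided slope conditions give the inclusions $\{S>0\}\subseteq\{A^{*}_{T}=\pi\}$ and $\{S<0\}\subseteq\{A^{*}_{T}=0\}$.

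Three points deserve to be made explicit, since they are exactly where the paper's phrase ``under mild additional assumptions'' is doing work. First, in (a) the superimposed strategy leaves the penalty unchanged only if its cumulative position change vanishes at $T$, i.e.\ $\sum_{t=0}^{T}\vartheta^{0}_{t}=0$; via summation by parts this is what identifies the trading gain with the Dalang--Morton--Willinger stochastic integral $\sum_{t}\Theta_{t}(A^{*}_{t+1}-A^{*}_{t})$, and you should say so. Second, Definition~\ref{de:equilibrium}(ii) only quantifies over competitor strategies whose utility exists, so each perturbation must be checked to stay in that class; for your perturbations this is harmless because they dominate the equilibrium wealth pathwise (so $u^{i}$ of the perturbed wealth is bounded below by the finite $u^{i}(L^{A^{*},i}(\vartheta^{*i},\xi^{*i}))$), but the step should be stated. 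Third, to convert a pathwise improvement into a strict increase of expected utility you should localize the perturbation on the $\cf_{t}$-measurable (resp.\ $\cf_{T}$-measurable) event where the improvement is strict; you gesture at this but do not write it. Finally, note that eliminating $\{S=0\}$ requires the conditional-continuity hypothesis on the final-period emissions, which the paper states somewhat loosely (it writes $\sum_{i}E^{i}_{T}$ although the emission process is indexed up to $T-1$); your appeal to it is in the same spirit as the paper's and is the one genuinely non-removable extra assumption.
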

In fact, Proposition \ref{mainprop} states the above-mentioned feedback relationship. Namely,
the equilibrium allowance price process $\{A^{*}_{t}\}_{t=0}^{T}$
is a martingale under  $\Qb\sim \Pro$ that has the terminal value (\ref{termialprice}).
However, this terminal random variable depends  on  the intermediate values  $\{A^{*}_{t}\}_{t=0}^{T-1}$ through
(\ref{marginalcosts}).
A surprising consequence of the feedback relationship is that, from the risk-neutral perspective, only the cumulative
market quantities are relevant. To see this, introduce the overall ``business-as-usual" allowance shortage by
\begin{eqnarray*} \label{capadjusted}
\ce_{T}=\sum_{i \in I}\left(\sum_{t=0}^{T-1}E^{i}_{t} -\gamma^{i}\right).
\end{eqnarray*}
Further, recall from (\ref{cfunt}) the cumulative abatement function $r_t(a)$
to express the risk-neutral certificate price dynamics using (\ref{marginalcosts}) (\ref{termialprice}) and the martingale property of $\{A_t^\ast\}_{t=0}^{T}$ under $\Qb$ as
  $$
A_{t}^\ast=\pi \E^{\Qb}
\left[ \mathbf{1}\left(\ce_{T}-\sum_{t=0}^{T-1}r_{t}(A^{*}_{t}) \ge 0\right)  \, \Big{|} \, \cf_{t}
\right], \qquad t=0, \dots, T-1.
  $$
Although the individual market attributes seem to be irrelevant here,
the  reader should notice that this picture appears only from the
risk-neutral viewpoint.

With this,  the  problem of risk neutral allowance price  modeling boils down to the following task:
\begin{equation*} \boxed{ \begin{array}{c}
\hbox{Given a measure $\Qb \sim \Pro$ and reduction functions } \\
\hbox{$\{r_{t}\}_{t=0}^{T-1}$, describe the random variable $\ce_{T}$ and } \\
\hbox{determine  a $\Qb$-martingale $\{A^{*}_{t}\}_{t=0}^{T}$ such that} \\
 \hbox{$A^{*}_{T}=\pi \mathbf{1}\left(\ce_{T}-\sum_{t=0}^{T-1}r_{t}(A^{*}_{t}) \ge 0\right)$.}
\end{array}} \qquad  \label{ReducedProblem}
\end{equation*}
In this form, transition to the continuous time case is straightforward and the resulting problem can be stated as follows:
\begin{equation}
\boxed{ \begin{array}{c}
\hbox{Given a measure $\Qb \sim \Pro$ and reduction functions } \\
\hbox{$\{r_{t}\}_{t \in [0, T]}$, describe the random variable $\ce_{T}$ and } \\
\hbox{determine  a $\Qb$-martingale $\{A^{*}_{t}\}_{t \in [0, T]}$ such that} \\
 \hbox{$A^{*}_{T}=\pi \mathbf{1}\left(\ce_{T}-\int_{0}^{T}r_{s}(A^{*}_{s})ds \ge 0\right)$.}
\end{array}}
 \label{ReducedProblemCont}
\end{equation}
Problem (\ref{ReducedProblemCont}) is the starting point of our investigation in this paper.
This approach utilizes the ingredients $\{r_{t}\}_{t \in [0, T]}$ and  $\ce_{T}$, which is reasonable from the practical
perspective since the price-dependent reduction functions   $\{r_{t}\}_{t \in [0, T]}$  can be estimated from the market data and the potential  allowance shortage ${\cal E}_{T}$ can be modeled in terms of  emission fluctuations.

\section{Continuous time case: diffusion models}
\label{ContinuousTime}

Modeling in continuous time stipulates that the compliance period is an interval $[0, T]$ and that all the relevant
random evolutions are described  by adapted stochastic processes on a filtered probability space $(\Omega, \cf, \Qb, \{\cf_{t}\}_{t \in [0, T]})$ which is equipped with a ``spot martingale" probability measure $\Qb \sim \Pro$.
Given a random variable  $\ce_{T}$ and appropriate non-decreasing continuous abatement functions $r_{t}: \rr_{+} \times \Omega \to \rr_{+}$ indexed by $t \in [0, T]$, we want to find a solution  $\{A_{t}\}_{t \in [0, T]}$ to
\begin{equation} \label{charact1}
A_{t}=\pi \E^{\Qb}\left[\mathbf{1}\left(\ce_{T}-\int_{0}^{T}r_{s}(A_{s})ds \ge 0\right) \, \Big{|} \, \cf_{t}\right], \qquad t \in [0, T].
\end{equation}

The results of the discrete-time analysis given in \cite{HinzNovikov} suggest that, if the increments of the martingale
\begin{equation*}
\{{\cal E}_{t}:=\E^{\Qb}({\cal E}_{T}\, | \, \cf_{t})\}_{t \in [0, T]}
\end{equation*}
are independent and the abatement functions $r_{t}: \rr_{+} \times\Omega \to \rr_{+}$
are deterministic and time independent, then one can reasonably expect that a solution to (\ref{charact1}) can have the functional form
$$
A_{t}=\alpha(t, X_{t}), \qquad t \in [0, T],
$$
with an appropriate deterministic function
\begin{equation}\label{alpha}
\alpha: \, [0, T]\times \rr \mapsto \rr 
 \end{equation}
and a  state  process $\{X_{t}\}_{t \in [0, T]}$ given  by
\begin{equation}\label{processX}
X_{t}:=\ce_{t}-\int_{0}^{t}r_{s}(A_{s})ds, \qquad t \in [0, T].
\end{equation}

In this section, we demonstrate how this approach enables one to find a solution in the framework of diffusion
processes. Assume that $\{W_t\}_{t \in [0, T]}$ is a standard Brownian motion process (under $\Qb \sim \Pro$) and
that our $\{\cf_{t}\}_{t \in [0, T]}$ is the natural filtration of the process. In this case, by the martingale
representation theorem \cite{KaratzsasShreve1}, one must have
$$
d\ce_{t}=\sigma_{t}dW_{t}
$$
for some admissible adapted process $\{\sigma_t\}_{t \in [0, T]}$. To ensure that $\{\ce_{t}\}_{t \in [0, T]}$ has independent increments, we assume that $\{\sigma_{t}=\sigma(t)\}_{t \in [0, T]}$ is a known deterministic function and that we are given
continuous non-decreasing time-independent abatement functions $\{r_{t}=r\}_{t \in [0, T]}$.
To verify the martingale property of the allowance price process
$$
A_{t}=\alpha(t, X_{t}), \qquad t \in [0, T],
$$
we use It\^o's formula and (\ref{processX}) to write the stochastic differential of the process as
\begin{align*}
dA_{t} &= d\alpha(t, X_{t})\\
 &= \partial_{(1, 0)}\alpha(t, X_{t})dt+ \partial_{(0, 1)}\alpha(t, X_{t})d X_{t}
            + \frac{1}{2}\partial_{(0, 2)}\alpha(t, X_{t})d[\ce]_{t}\\
 &=  \partial_{(1, 0)}\alpha(t, X_{t})dt-
\partial_{(0, 1)}\alpha(t, X_{t})r(\alpha(t,X_{t}))dt+\frac{1}{2}\partial_{(0, 2)}\alpha(t, X_{t})\sigma^{2}(t)dt \\
& \hspace{.5cm}+ \partial_{(0, 1)}\alpha(t, X_{t}) \sigma(t)dW_{t}.
\end{align*}
Here $[\ce]_{t}$ stands for the quadratic variation of the martingale
$\{\ce_{t}\}_{t \in [0, T]}$ and $\partial_{(i, j)}$ denotes the respective partial derivatives.
Now we observe that
the function $\alpha$ is a  solution to
\begin{equation} \label{pde} \
\partial_{(1, 0)}\alpha(t, x)-
r(\alpha(t,x))\partial_{(0, 1)}\alpha(t, x)+\frac{1}{2}\sigma^{2}(t)\partial_{(0, 2)}\alpha(t, x)=0
\end{equation}
in $(0,T)\times\R$ with the boundary condition
\begin{equation} \label{bc}
\hbox{$\alpha(T, x)=\pi \mathbf{1}(x\geqslant 0), \qquad x \in \rr$,}
\end{equation}
justified by the digital terminal allowance price, then the thus constructed $\{A_t\}_{t \in [0, T]}$ will indeed be a martingale that satisfies (\ref{charact1}) by construction. Note that $\alpha(t,x)$ satisfies the maximum principle, so that $0\leqslant \alpha(t,x) \leqslant\pi$ for all $(t,x)\in[0,T]\times\R$. For the proof of this fact, see Proposition \ref{maxprinciple} below.

The following summarizes the above-presented approach.

\bigskip

{\bf Allowance price in the diffusion framework}%
\begin{enumerate}
\item Given a continuous non-decreasing function $r : [0, \infty) \to [0, \infty)$
      and a positive function $\sigma(t)$, $t \in [0, T]\times\R$, determine a solution $\alpha$ to the boundary value problem (\ref{pde}), (\ref{bc}). (We assume that $\sigma(t)$ is regular enough to ensure existence and uniqueness of the solution.)
\item Verify that there is a unique strong solution to
\begin{equation}
dX_{t}=d\ce_{t}-r(\alpha(t, X_{t}))dt, \qquad X_{0}=\ce_{0}. \label{sde}
\end{equation}
\item Introduce the allowance price $\{A_{t}\}_{t \in [0, T]}$ by
\begin{equation*}
A_{t}:=\alpha(t, X_{t}), \qquad t \in [0, T]. \label{introd}
\end{equation*}
\end{enumerate}

Having  constructed the allowance price process $\{A_{t}\}_{t \in [0, T]}$  in this way, one obtains a standard procedure for the valuation of European options.
Indeed, observe that, due to the Markov property of the strong solution to $(\ref{sde})$,
the fair time $t$ price of a European call option written on the allowance price at (maturity) date $\tau \in (t, T]$ is given in terms of an appropriate function of the state variable:
$$
C_{t}=\E^{\Qb}\left[(A_{\tau}-K)^{+} \mid  \cf_{t}\right]=\E^{\Qb}\left[(\alpha(\tau, X_\tau)-K)^{+} \mid \cf_{t}\right]=f^{\tau}(t, X_t).
$$
To ensure that $\{C_{t}=f^{\tau}(t, X_t)\}_{t \in [0, \tau]}$ is a martingale,
the function $f^{\tau}: [0, \tau) \times \rr \to \rr$ is to be taken as a solution to the linear partial differential equation
\begin{equation} \label{pde1}
\partial_{(1, 0)}f^{\tau}(t, x)-
\partial_{(0, 1)}f^{\tau}(t, x)r(\alpha(t,x))+\frac{1}{2}\partial_{(0, 2)}f^{\tau}(t,x)\sigma^{2}(t)=0,
\end{equation}
in  $(0,\tau)\times\R.$
However, the boundary condition in this case will be
\begin{equation} \label{bc1}
\hbox{$f^{\tau}(\tau, x)=(\alpha(\tau,x)-K)^{+}$, \qquad $x \in \rr$.}
\end{equation}

Summarizing, we obtain the following description for the procedure.

\bigskip

{\bf Valuating a European call in the diffusion framework}
\begin{enumerate}
\item Find the function $\alpha$ as above.
\item Given a strike price $K\ge 0$ and maturity time $\tau \in [0, T]$ of a European call,
      calculate $f^{\tau}$ as the solution to the boundary problem (\ref{pde1}), (\ref{bc1}).
\item Given a time $t \in [0, \tau]$ and the allowance price $a \in [0, \pi]$ at time $t$,
      obtain  $x$ as the solution to $\alpha(t,x)=a$.
\item Substitute $t$ and the thus obtained $x$ into the function $f^{\tau}$ to obtain the time $t$ price of the European call as $f^{\tau}(t, x)$.
\end{enumerate}

Note that one can also estimate $f^{\tau}$ directly using Monte Carlo simulations: given a strike price $K\ge 0$, a maturity time $\tau \in [0, T]$ of a European call, a time $t \in [0, \tau]$ and the allowance price $a \in [0, \pi]$ at time $t$, obtain $X_t= x$ as the solution to $\alpha(t,x)=a$. Then, using the Markov property of $\{X_t\}$, one can evaluate the option price
$$
C_{t}=\E^{\Qb}\left[(\alpha(\tau, X_\tau)-K)^{+} \mid X_t=x\right]
$$
by estimating the expectation via simulating a (large enough) number $N_{MC}$ of copies of the random variable $X_\tau$.

Note that closed-form solutions to the non-linear  partial differential equation (\ref{pde})
are rarely available. However, a linear abatement function leads to explicit expressions, as
pointed out in \cite{Seifert} and \cite{Wagner}. We will consider this case as an illustration. \\[10pt]
{\bf Example}.
Given a linear abatement function $r(a)= ca$, $c \in (0, \infty)$,
and a constant diffusion coefficient $\{\sigma_{t}=\sigma\}_{t \in [0, T]}$ with $\sigma \in (0, \infty)$,
the partial differential equation (\ref{pde}) becomes Burger's equation
\begin{equation}
\partial_{(1, 0)}u- c u \partial_{(0,1)}u+\frac{\sigma^{2}}{2} \partial_{(0,2)}u =0,  \label{Burger}
\end{equation}
whose solution  can be obtained from that of the heat equation via the Hopf-Cole transform.

Namely, a direct calculation shows that if $v: [0, T]\times \rr \mapsto \rr$ solves the heat equation
\begin{equation}
\partial_{(1,0)}v + \frac{\sigma^{2}}{2} \partial_{(0,2)}v=0, \label{Heat}
\end{equation}
then  its Hopf-Cole transform
\begin{equation}
u:=- \frac{\sigma^{2}}{c} \frac{\partial_{(0, 1)}v}{v} \label{HopfCole}
\end{equation}
solves (\ref{Burger}). In order to satisfy an original boundary condition
$$
u(T, x)\equiv-\frac{\sigma^{2}}{c} \frac{\partial_{(0, 1)}v(T, x)}{v(T, x)}=b(x), \qquad x \in \rr,
$$
the boundary value function for (\ref{Heat}) must be chosen as
\begin{equation*}
v(T, x)=\tilde b(x):= \exp\left\{ -\frac{c}{\sigma^{2}} \int_{-\infty}^{x}b(u)du \right\}, \qquad x \in \rr. \label{chosen}
\end{equation*}
Our digital  boundary condition has the form
$$
u(T, x)=b(x)=\pi \mathbf{1}(x\geqslant 0), \qquad x\in\rr,
$$
so that we should take
$$
v(T, x)=\tilde b(x)=\mathbf{1}(x< 0)+  \mathbf{1}(x\geqslant 0)e^{-c \pi x/\sigma^{2}}, \qquad x\in\rr.
$$
Denoting by $\NN(\mu,s^2)$ the normal distribution with mean $\mu$ and variance $s^2$ and by $\Phi(y)$ and $\varphi(y)=\Phi'(y)$ the standard normal distribution function and its density, respectively, we can write:
\begin{align}
v(t,x)&=\int\tilde b(y)\NN(x, \sigma^{2}(T-t))(dy) \nonumber\\
&=\int_{-\infty}^{0}\NN(x,\sigma^{2}(T-t))(dy)+\int_{0}^{\infty}e^{-c\pi y/\sigma^{2}}\NN(x, \sigma^{2}(T-t))(dy)
\nonumber \\
&=\Phi\left(\frac{-x}{\sigma\sqrt{T-t}}\right)+\Phi\left(\frac{x-c\pi(T-t)}{\sigma \sqrt{T-t}}\right)
\exp\left\{-\frac{c \pi}{\sigma^{2}}x+ \frac{\pi^{2}c^{2}(T-t)}{2\sigma^{2}}\right\}. \label{orig}
\end{align}
The derivative of this expression with respect to $x$ is equal to
\begin{align}
\partial_{(0,1)}v(t,x)&=-\frac{\varphi\left(\frac{-x}{\sigma\sqrt{T-t}}\right)}{\sigma \sqrt{T-t}}+
\frac{\varphi\left(\frac{x-c\pi(T-t)}{\sigma \sqrt{T-t}}\right)}{\sigma\sqrt{T-t}}
\exp\left\{-\frac{c \pi}{\sigma^{2}}x+ \frac{\pi^{2}c^{2}(T-t)}{2\sigma^{2}}\right\} \nonumber \\
& -\frac{c \pi}{\sigma^{2}}\Phi\left(\frac{x-c\pi(T-t)}{\sigma \sqrt{T-t}}\right)
\exp\left\{-\frac{c \pi}{\sigma^{2}}x+ \frac{\pi^{2}c^{2}(T-t)}{2\sigma^{2}}\right\}. \label{deriv}
\end{align}
Next we use (\ref{HopfCole}) to calculate
\begin{equation} \label{shape}
\alpha(t,x)=-\frac{\sigma^{2}}{c} \frac{\partial_{(0,1)}v(t,x)}{v(t,x)}, \qquad (t,x)\in [0, T]\times\R.
\end{equation}
The  shape of this function is depicted in Figure \ref{pic}.
\begin{figure}
\begin{center}
   \includegraphics[height=2.2in, width=5in]{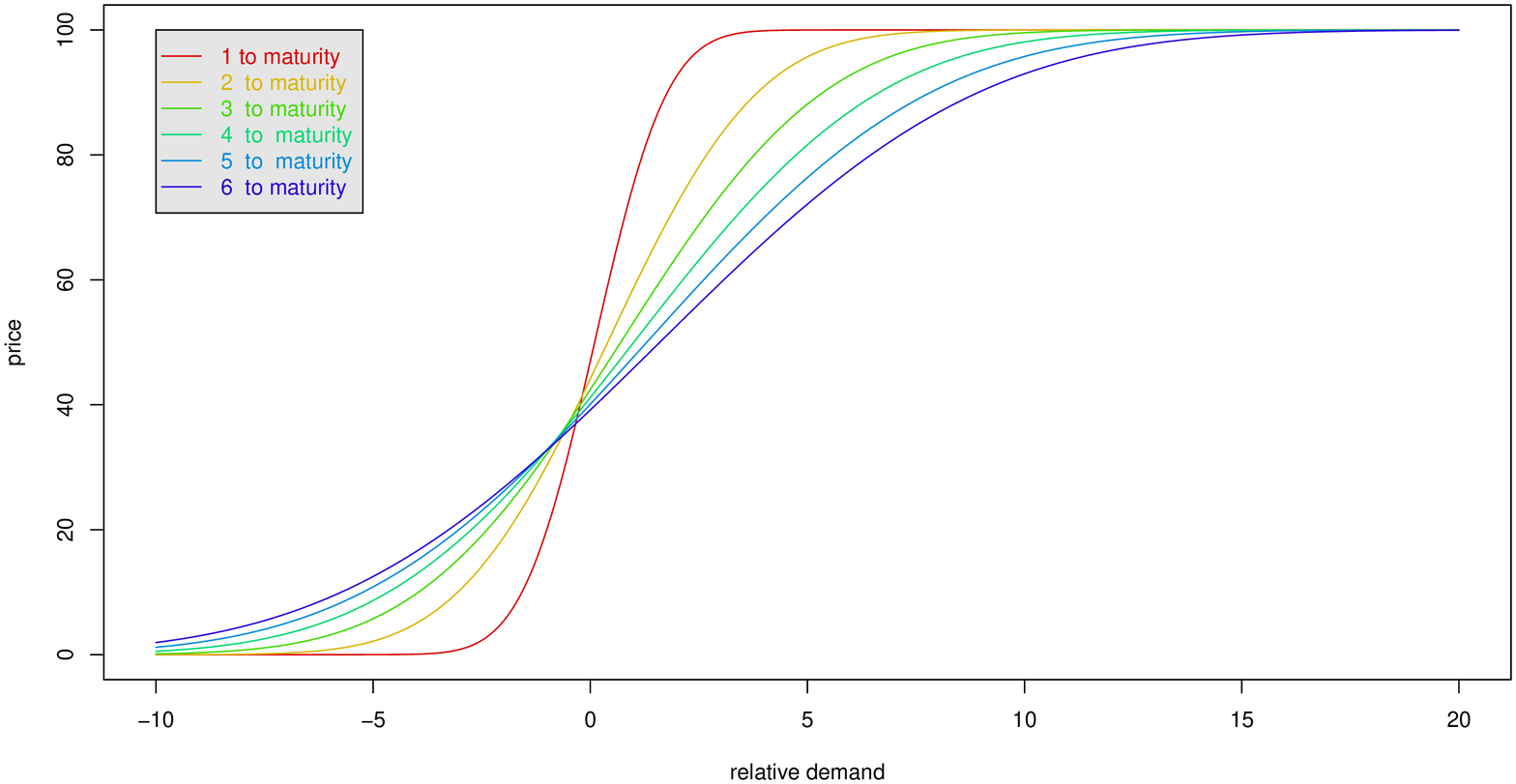}
   \caption{ \label{pic}
   The  functions $\alpha(t, \cdot )$ displayed for $t=1.9, 1.6,1.3,1.0,0.7,0.4$ and calculated using
   (\ref{orig})--(\ref{shape})
   with $T=2$, $\sigma=4$, $c=0.02$ and $\pi=100$.}
\end{center}
\end{figure}

To illustrate our valuation procedure, consider the following parameters:
time to compliance date $T=2$, diffusion coefficient $\sigma=4$, penalty $\pi=100$ and a linear abatement function $r(a)=ca$ with $c=0.02$. Taking $t=0$, we consider a family
of European calls with the same strike price $K=25$, but different maturity times $\tau $.
Suppose that the initial allowance price is equal to the strike price $a=A_{0}=25$ (the so-called
at-the-money situation), which is attained by $\alpha(0,x)=25$ with $x \approx -2.434$.
Next, we determine the call prices $C(0, \tau)$ at time $t=0$
for different maturity times $\tau \in [0, T]$. Independently of the model, the
price of an expiring call with $\tau=0$ must be equal to zero, whereas the longest-maturity
call with $\tau=T$ must have the price  $A_{0}(\pi-K)/\pi=25\times 0.75=18.75$. (Note that, because of
the  digital payoff, such a call  is equivalent to  $0.75$ allowances). That is, the call prices increase
with contract's maturity from $0$ to $18.75$.
The shape of this curve is obtained using a crude Monte Carlo procedure. We have simulated $N_{MC}=10^4$ i.i.d. copies of the random variable $X_\tau$. We used the forward Euler method to generate i.i.d. copies of $\{X_t\}$, with time step $0.02$. The curve is presented in Figure \ref{callline}, together with $95\%$ confidence intervals.

\begin{figure}
\begin{center}
   \includegraphics[height=2.2in, width=5in]{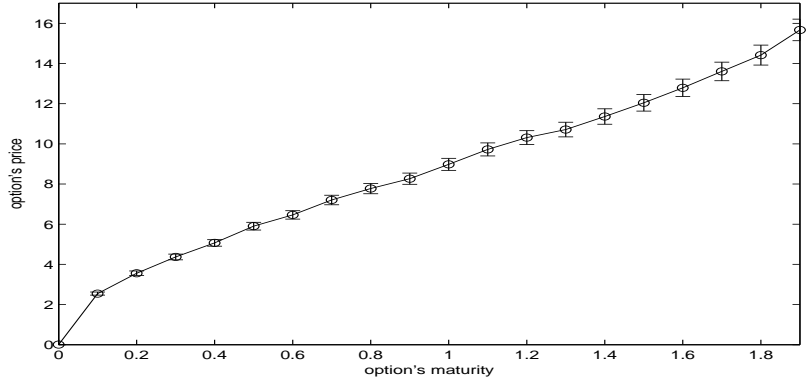}
   \caption{ \label{callline}
   The prices of European calls depending on their maturity times.}
\end{center}
\end{figure}

\section{Allowance price in the jump-diffusion setting}
\label{Jumps}

In this section, we describe a situation where the market can suddenly change due to the presence of jumps. We propose a framework suitable to the present practice of the EU ETS where the member states negotiate  their allowance allocations and the market needs to adapt to the new situation. For instance, revised decision on the amount of allocated certificates yields a jump of allowance market prices. Also, a sudden change in demand and/or price for fuel can result in the pollution levels changing dramatically which impacts on the allowance prices.

We discuss how to solve problem (\ref{ReducedProblemCont}) in the jump-diffusion setting using
a similar approach to the one employed in Section \ref{ContinuousTime}. Namely, we choose a candidate
for (\ref{charact1}) of the form
\begin{equation*}
A_t= \alpha(t,X_t),
\end{equation*}
where $\alpha$ and $X_t$ are as in (\ref{alpha}) and (\ref{processX}), respectively. But now we assume that the martingale
$${\cal E}_{t}:=\E^{\Qb}\left[{\cal E}_{T}\, | \, \cf_{t} \right], \qquad t \in [0, T]$$ is modeled
using a general jump-diffusion process adapted to a filtration $\{{\cal F}_t\}_{t \in [0, T]}$. In this setting, the process $\{X_t\}_{t\in[0,T]}$ is given by the following stochastic differential equation:
\begin{equation}\label{diffGt}
dX_t = -r(\alpha(t,X_t))dt + \sigma(t,X_t)dW_t + \int_{\R_0} a(t-,X_{t-},y)(p_\nu-q_\nu)(dy\times dt),
\end{equation}
where  $\R_0=\R\setminus \{0\}$, $\{W_t\}_{t \in [0, T]}$ is a Brownian motion adapted to the filtration $\{{\cal F}_t\}_{t \in [0, T]}$, and $p_\nu$ is an independent of $\{W_t\}_{t \in [0, T]}$ $\{{\cal F}_t\}_{t \in [0, T]}$-adapted random Poisson measure with intensity $q_\nu(dy\times dt)=\lambda\nu(dy)dt$, $\nu$ being a probability distribution on $\R$ and $\lambda \in (0, \infty)$
a positive constant. Expression (\ref{diffGt}) allows us to model jumps
with great flexibility, as the distribution of the jump is state and time dependent. It is known that, under suitable
Lipschitz and growth conditions on $r$, $\sigma$ and $a$, stochastic differential equation (\ref{diffGt}) possesses
a unique strong solution \cite{GikhmanS}.

We again use It\^o's lemma for $A_t=\alpha(t,X_t)$ to find conditions on $\alpha$ under which the process $\{A_t\}_{t \in [0, T]}$ will be a martingale under $\Qb$. Let $\tau=T-t$. For convenience, we introduce a new function $\beta(\tau,x)=\alpha(T-\tau,x)$ and, instead of (\ref{pde}), obtain the following nonlinear partial integro-differential equation for $\beta$:

\begin{eqnarray}
\lefteqn{\partial_{(1,0)}\beta(\tau,x)= -r(\beta(\tau,x))\partial_{(0,1)}\beta(\tau,x)+\frac{1}{2}\sigma^2(\tau,x)\partial_{(0,2)}\beta(\tau,x)}  \nonumber\\
\label{pide}
&&  + \lambda\int \left[\beta(\tau,x+a(\tau,x,y))-\beta(\tau,x)-a(\tau,x,y)\partial_{(0,1)}\beta(\tau,x)\right]\nu(dy),
\end{eqnarray}
$(\tau,x)\in[0,T]\times\R$, with the boundary condition
\begin{equation}\label{pidebc}
\beta(0,x)= \pi \mathbf{1}(x\geqslant 0),  \qquad x\in\R.
\end{equation}

Under certain assumptions, jump-diffusion models for option pricing lead to partial integro-differential equations, see e.g.\ \cite{Andersen, Cont}. Equation (\ref{pide}) differs from them in that it has a nonlinear coefficient $r(\beta(\tau,x))$. We assume that there exists a classical solution $\beta\in C^{1,2}([0,T)\times \R)$ to the problem (\ref{pide}), (\ref{pidebc}). It is not difficult to see that it satisfies the maximum principle.

\begin{Proposition}\label{maxprinciple}
Solution $\beta(\tau,x)$ to equation (\ref{pide}) with initial boundary condition $\beta(0,x)=h(x)$ satisfies the maximum principle:
\begin{equation}\label{mp}
\inf_z h(z) \leqslant \beta(\tau,x) \leqslant \sup_z h(z), \qquad  (\tau,x)\in [0,T]\times\R.
\end{equation}
\end{Proposition}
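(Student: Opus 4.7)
The plan is to establish both bounds by a standard parabolic maximum-principle argument, adapted to accommodate the non-local term in (\ref{pide}). I will sketch the upper bound $\beta(\tau,x)\le M:=\sup_z h(z)$ in detail; the lower bound follows by the symmetric argument at a global minimum of $\beta$, where the Hessian sign and the jump-increment inequalities flip in precisely the required direction.

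The key idea is to introduce the time-tilted auxiliary $w_\epsilon(\tau,x):=\beta(\tau,x)-\epsilon\tau$ for a small $\epsilon>0$, which rules out tangency at interior extrema. Assuming $\beta$ is bounded on $[0,T]\times\R$ (which one gets from standard growth/Lipschitz hypotheses on $r$, $\sigma$, $a$, or equivalently from the probabilistic representation of $\beta$ as a conditional expectation of $h(X_T)$ under (\ref{diffGt})), one may further assume that $w_\epsilon$ attains its supremum at some $(\tau_0,x_0)\in[0,T]\times\R$; on the unbounded spatial domain, attainment is enforced, if needed, by adding a coercive penalty $-\eta g(x)$ with $g\ge 0$ smooth and $g(x)\to\infty$ as $|x|\to\infty$, and then letting $\eta\downarrow 0$ at the end.

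If $\tau_0=0$, then $\sup w_\epsilon = h(x_0)\le M$, whence $\beta(\tau,x)\le M+\epsilon T$ throughout. If instead $\tau_0>0$, the interior optimality conditions give
\begin{equation*}
\partial_\tau\beta(\tau_0,x_0)\ge\epsilon,\qquad \partial_{(0,1)}\beta(\tau_0,x_0)=0,\qquad \partial_{(0,2)}\beta(\tau_0,x_0)\le 0,
\end{equation*}
and globality of the maximum in $x$ at time $\tau_0$ additionally yields $\beta(\tau_0,x_0+a(\tau_0,x_0,y))-\beta(\tau_0,x_0)\le 0$ for every $y\in\R_0$. Substituting these into the right-hand side of (\ref{pide}) at $(\tau_0,x_0)$: the drift term $-r(\beta)\,\partial_{(0,1)}\beta$ vanishes, the diffusion term is non-positive, and, since $\partial_{(0,1)}\beta(\tau_0,x_0)=0$ kills the compensator $-a\,\partial_{(0,1)}\beta$, the integrand collapses to the non-positive quantity $\beta(\tau_0,x_0+a)-\beta(\tau_0,x_0)$, making the integral term non-positive as well. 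The right-hand side is therefore $\le 0$, contradicting the left-hand side being $\ge\epsilon>0$. Hence $\tau_0=0$, and letting $\epsilon\downarrow 0$ (together with $\eta\downarrow 0$ if used) gives $\beta\le M$.

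The main obstacle is the non-local term: the usual pointwise first- and second-order conditions at an extremum do not by themselves control an integral evaluating $\beta$ at points possibly far from $x_0$. One must ensure that $x_0$ is a \emph{global} (in $x$) maximizer of $\beta(\tau_0,\cdot)$, so that $\beta(\tau_0,x_0+a)-\beta(\tau_0,x_0)\le 0$ uniformly in the jump size. This is exactly the role of the $\epsilon$-tilt paired, on the unbounded domain $\R$, with the coercive spatial penalty: together they secure attainment of a global maximum while keeping a strictly positive lower bound on $\partial_\tau\beta$ at the extremum, which is what produces the contradiction.
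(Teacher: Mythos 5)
Your argument is a classical analytic maximum principle adapted to the non-local operator, and it is genuinely different from the paper's proof, which is purely probabilistic and essentially one line: since $\beta$ solves (\ref{pide}), the process $A_t=\alpha(t,X_t)$ with $\alpha(t,x)=\beta(T-t,x)$ is a $\Qb$-martingale and $\{X_t\}$ from (\ref{diffGt}) is Markov, so $\beta(\tau,x)=\E\big(h(X_T)\mid X_{T-\tau}=x\big)$, from which (\ref{mp}) is immediate. Your route has the merit of not presupposing well-posedness of the SDE or the validity of the stochastic representation, and it is what one would need in a pure PDE (or viscosity-solution) treatment; but note that you yourself invoke the probabilistic representation to obtain a priori boundedness of $\beta$, at which point the paper's conclusion is already available for free.

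There is, however, a genuine gap in the unbounded-domain step. For the boundary datum actually relevant here, $h(x)=\pi\mathbf{1}(x\geqslant 0)$, the supremum $\pi$ is approached only as $x\to+\infty$, so the coercive penalization is not optional: you must work with $w_{\epsilon,\eta}=\beta-\epsilon\tau-\eta g(x)$ throughout. At a maximum of $w_{\epsilon,\eta}$ the optimality conditions you display no longer hold: one gets $\partial_{(0,1)}\beta(\tau_0,x_0)=\eta g'(x_0)\neq 0$, $\partial_{(0,2)}\beta(\tau_0,x_0)\leqslant \eta g''(x_0)$, and globality yields only $\beta(\tau_0,x_0+a)-\beta(\tau_0,x_0)\leqslant \eta\big(g(x_0+a)-g(x_0)\big)$. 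Hence the drift term $-r(\beta)\partial_{(0,1)}\beta$, the compensator $-\lambda\int a\,\partial_{(0,1)}\beta\,\nu(dy)$ and the diffusion term all contribute errors of order $\eta$ times quantities such as $r(\beta(\tau_0,x_0))\,|g'(x_0)|$, $\sigma^2(\tau_0,x_0)\,|g''(x_0)|$ and $\int\big(|a(\tau_0,x_0,y)|\,|g'(x_0)|+|g(x_0+a)-g(x_0)|\big)\nu(dy)$. For the contradiction with $\partial_\tau\beta\geqslant\epsilon$ to survive as $\eta\downarrow 0$ these must be bounded uniformly in $x_0$, which may escape to infinity as $\eta\downarrow 0$; that requires explicit growth hypotheses on $\sigma$ and $a$ and a compatible choice of $g$ (e.g.\ $g(x)=\sqrt{1+x^2}$). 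As written, the proof silently uses the penalty-free optimality conditions in exactly the situation where the penalty cannot be dispensed with. The repair is standard but must be carried out; alternatively, adopt the paper's probabilistic argument, which avoids the issue entirely.
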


\begin{proof}
Since with $\alpha(t,x)=\beta(T-t,x)$, where $\beta$ satisfies (\ref{pide}), the process $A_t=\alpha(t,X_t)$ is a $\Qb$-martingale. Also, the process $\{X_t\}$ given by (\ref{diffGt}) is a Markov process. Thus, one has a.s.
\[
\alpha(t,X_t)=\E(A_T\mid {\cal F}_t)= \E(\alpha(T,X_T)\mid {\cal F}_t)=\E(\alpha(T,X_T)\mid X_t)
\]
and so, due to regularity of $\alpha$, for $\tau=T-t$,
\begin{align*}
\beta(\tau,x)&=\alpha(t,x)=\E(\alpha(T,X_T)\mid X_t=x)\\
&=\E(\beta(0,X_T)\mid X_t=x) =\E(h(X_T)\mid X_t=x),
\end{align*}
which immediately implies (\ref{mp}).
\end{proof}

Note that Proposition \ref{maxprinciple} also proves the maximum principle for solution to the boundary value problem (\ref{pde}), (\ref{bc}).

We will see that, under a mild condition, the maximum principle remains valid for the discretized problem as well (see Proposition \ref{maxp} below).

Analytical solution to partial integro-differential equations can only be obtained in a few special cases, so in most situations one can only solve them numerically.
In this section, we discuss a discretization of (\ref{pide}) using the finite difference method and prove that the discretized equation has a unique solution. Finally, we illustrate the approach by providing a numerical example.

First we need to truncate the domain of $x$. Let $D_l:=\{x\in\R:\, |x|<l\}$, where the bound $l>0$ of the domain can be chosen so that the probability that the process $\{X_t\}$ leaves $D_l$ during the time interval $[0,T]$ given it starts at $X_0$ does not exceed a given small $\epsilon>0$. This procedure will be illustrated at the end of the section.

Next, we restrict the domain of integration for the integral term on the RHS of (\ref{pide}) to an interval $[K_1,K_2]$, chosen such that the error made due to the truncation also remains small. For a good choice of the terminals $K_j$, one can refer to the study presented in \cite{Bria04}.

The numerical solution will be computed on a discrete grid. Let $N$ denote the total number of discrete $x$-values and $M$ the total number of $\tau$-values we want to use in the grid for the numerical solution, so that the step sizes in $x$ and $\tau$ are respectively $\Delta_{x}=2l/N$ and $\Delta_\tau=T/M$. Put $x_i=-l+i\Delta_{x}$, $\tau_n=n\Delta_{\tau}$, for $i\in\Z$ and $n=0,\hdots,M$. We use $\beta_i^n=\beta(\tau_n, x_i)$ for the values of $\beta$ on this grid. Because of the presence of the non local term on the RHS of equation (\ref{pide}), one needs to define $\beta$ outside $[0,T]\times D_l$. We chose the simplest and most intuitive approach and set $\beta_i^n:=g(x_i)$ for $i\notin\{0,\hdots,N-1\}$, where
\begin{equation*}
g(x):=\begin{cases}
\pi ~~\textrm{ if } ~~x\geqslant l,\\
0 ~~\textrm{ if } ~~x\leqslant -l.
\end{cases}
\end{equation*}
The partial derivatives are replaced by the respective finite differences:
\begin{align*}
 \partial_{(1,0)}\beta
(\tau_{n}, x_{i}) &\approx \frac{\beta_i^{n+1}-\beta_i^n}{\Delta_{\tau}},\\
\partial_{(0,1)}\beta
(\tau_{n}, x_{i}) &\approx \frac{\beta_i^n-\beta_{i-1}^n}{\Delta_{x}},\\
\partial_{(1,0)}\beta
(\tau_{n}, x_{i})
&\approx \frac{\beta_{i+1}^n-2\beta_i^n + \beta_{i-1}^n}{(\Delta_{x})^2}.
\end{align*}
Following an approach similar to the one used in \cite{Cont05}, we consider the same step size $\Delta_{x}$ to approximate the integral term, and choose $J_1$ and $J_2$ such that $[K_1,K_2]\subseteq [(J_1-1/2)\Delta_{x}, (J_2+1/2)\Delta_{x}]$, which leads to the approximation
\[
\lambda\int_{\R_0} \beta(\tau_n,x_i+a(\tau_n,x_i,y)) \nu(dy) \approx \lambda\sum\limits_{j=J_1}^{J_2} \nu_j \beta_{j^\ast(i,j,n)}^n
\]
where
\[
\nu_j := \int\limits_{(j-1/2)\Delta_{x}}^{(j+1/2)\Delta_{x}} \nu(dy)
\]
and
\[
j^\ast(i,j,n) := \mbox{ arg } \min_k |(x_i+ a(\tau_n,x_i,x_j))-(k \Delta_{x}-l)|.
\]
Similarly, one has
\[
\partial_{(0,1)}\beta(\tau_n,x_i)\int\limits_{K_1}^{K_2}a(\tau_n,x_i,y)\nu(dy) \approx \frac{\beta_i^n-\beta_{i-1}^n}{\Delta_{x}}\sum\limits_{j=J_1}^{J_2} a_{i,j}^n\nu_j,
\]
where $a_{i,j}^n := a(\tau_n,x_i,x_j)$. We set $\Sigma_i^n := \sum\limits_{j=J_1}^{J_2} a_{i,j}^n\nu_j$ for convenience.

So now we are looking for a solution $\{\beta_i^n\}$ on the grid such that, for $n=0,\hdots,M-1$,
\begin{align} \nonumber
\frac{\beta_i^{n+1}-\beta_i^n}{\Delta_{\tau}}&=
\frac{1}{2}\sigma^2(\tau_{n+1},x_i) \frac{\beta_{i+1}^{n+1}-2\beta_i^{n+1} + \beta_{i-1}^{n+1}}{(\Delta_{x})^2}-r(\beta^{n}_{i})\frac{\beta_i^{n+1}-\beta_{i-1}^{n+1}}{\Delta_{x}} \\
&
+ \lambda \sum\limits_{j=J_l}^{J_r} \nu_j \beta_{j^\ast(i,j,n)}^{n}- \lambda\beta_i^{n+1}
-\lambda \frac{\beta_i^{n+1}-\beta_{i-1}^{n+1}}{\Delta_{x}}\Sigma_i^{n+1} \label{discrete_eq}
\end{align}
with
\begin{equation*}
\beta_i^{0} = \pi \mathbf{1}(x_i \geqslant 0)\quad \hbox{for $ i\in \zz$},  \qquad \beta_i^{n}=g(x_{i}) \quad \hbox{for $i<0$ and $i\ge N$.}
\end{equation*}

Let $\Sigma^\ast := \min \Sigma_i^n$ and $\sigma^{\ast 2}:=\min \sigma^2(\tau_n,x_i)$, where the minima are taken over all $i\in\{0,\hdots,N-1\}$ and $n\in\{0,\hdots,M\}$.

\begin{Proposition}\label{maxp}
The discretized problem has a unique solution $\{\beta_i^n\}$. If, in addition, $\sigma^2$ is bounded away from zero and the discrete grid is such that
\begin{equation*}\label{cond}
-\Sigma^\ast\Delta_{x} \leqslant \frac{\sigma^{\ast 2}}{2\lambda },
\end{equation*}
then the solution satisfies the maximum principle:
\[
0\leqslant \beta_i^n\leqslant\pi \mbox{ for any } i\in\Z \mbox{ and } n\in\{0,\hdots,M-1\},
\]
where $\pi$ is the penalty per unit of pollutant not covered by the initial allocation.
\end{Proposition}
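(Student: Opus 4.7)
The plan is to rewrite (\ref{discrete_eq}) as a tridiagonal linear system at each time step and then establish existence, uniqueness, and the maximum principle jointly by induction on $n$. Setting $\sigma_i^2:=\sigma^2(\tau_{n+1},x_i)$, for $i\in\{0,\dots,N-1\}$ the scheme takes the form
\begin{equation*}
A_i^{n+1}\beta_{i-1}^{n+1}+B_i^{n+1}\beta_i^{n+1}+C_i^{n+1}\beta_{i+1}^{n+1}=G_i^n,
\end{equation*}
with coefficients
\begin{align*}
A_i^{n+1}&=-\frac{\Delta_\tau\sigma_i^2}{2(\Delta_{x})^2}-\frac{\Delta_\tau\, r(\beta_i^n)}{\Delta_{x}}-\frac{\lambda\Delta_\tau\Sigma_i^{n+1}}{\Delta_{x}},\\
C_i^{n+1}&=-\frac{\Delta_\tau\sigma_i^2}{2(\Delta_{x})^2},\\
B_i^{n+1}&=1+\frac{\Delta_\tau\sigma_i^2}{(\Delta_{x})^2}+\frac{\Delta_\tau\, r(\beta_i^n)}{\Delta_{x}}+\lambda\Delta_\tau+\frac{\lambda\Delta_\tau\Sigma_i^{n+1}}{\Delta_{x}},
\end{align*}
and $G_i^n=\beta_i^n+\lambda\Delta_\tau\sum_{j=J_1}^{J_2}\nu_j\beta^n_{j^{\ast}(i,j,n)}$, with prescribed boundary values $\beta_{-1}^{n+1}=0$ and $\beta_N^{n+1}=\pi$. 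The essential algebraic identity is $A_i^{n+1}+B_i^{n+1}+C_i^{n+1}=1+\lambda\Delta_\tau$.

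The induction base $\beta_i^0\in\{0,\pi\}$ is immediate. For the inductive step I assume $\beta_j^n\in[0,\pi]$ for all $j\in\mathbb Z$; in particular $r(\beta_i^n)\geqslant 0$, since $r$ is non-negative on $[0,\infty)$. The sign estimate
\begin{equation*}
-A_i^{n+1}=\frac{\Delta_\tau}{\Delta_{x}}\Bigl(\frac{\sigma_i^2}{2\Delta_{x}}+r(\beta_i^n)+\lambda\Sigma_i^{n+1}\Bigr)\geqslant\frac{\Delta_\tau}{\Delta_{x}}\Bigl(\frac{\sigma^{\ast 2}}{2\Delta_{x}}+\lambda\Sigma^\ast\Bigr)\geqslant 0
\end{equation*}
uses the uniform bound $\sigma^{\ast 2}>0$ together with the stated condition $-\Sigma^\ast\Delta_{x}\leqslant\sigma^{\ast 2}/(2\lambda)$, equivalent to $\lambda\Sigma^\ast\geqslant-\sigma^{\ast 2}/(2\Delta_{x})$. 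Hence $A_i^{n+1}\leqslant 0$ and $C_i^{n+1}<0$, and combined with the identity above this gives the strict row-dominance $B_i^{n+1}-|A_i^{n+1}|-|C_i^{n+1}|=1+\lambda\Delta_\tau>0$. By the Levy--Desplanques theorem the tridiagonal matrix acting on the interior unknowns $\{\beta_i^{n+1}\}_{i=0}^{N-1}$ is invertible, so $\beta^{n+1}$ is uniquely determined from $\beta^{n}$.

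For the bounds $\beta_i^{n+1}\in[0,\pi]$ I would invoke a discrete maximum principle. Suppose, for contradiction, that $\beta_{i^\ast}^{n+1}=\max_i\beta_i^{n+1}>\pi$. Since the boundary values lie in $\{0,\pi\}\subset[0,\pi]$, the index $i^\ast$ must belong to $\{0,\dots,N-1\}$, and $\beta_{i^\ast\pm 1}^{n+1}\leqslant\beta_{i^\ast}^{n+1}$. Because $A_{i^\ast}^{n+1}\leqslant 0$ and $C_{i^\ast}^{n+1}\leqslant 0$, multiplying by these non-positive coefficients reverses the inequalities, yielding
\begin{equation*}
(1+\lambda\Delta_\tau)\beta_{i^\ast}^{n+1}=(A_{i^\ast}^{n+1}+B_{i^\ast}^{n+1}+C_{i^\ast}^{n+1})\beta_{i^\ast}^{n+1}\leqslant A_{i^\ast}^{n+1}\beta_{i^\ast-1}^{n+1}+B_{i^\ast}^{n+1}\beta_{i^\ast}^{n+1}+C_{i^\ast}^{n+1}\beta_{i^\ast+1}^{n+1}=G_{i^\ast}^n.
\end{equation*}
On the other hand, $\sum_j\nu_j\leqslant 1$ (as $\nu$ is a probability measure) and $\beta_j^n\leqslant\pi$, so $G_{i^\ast}^n\leqslant\pi+\lambda\Delta_\tau\pi=\pi(1+\lambda\Delta_\tau)$, forcing $\beta_{i^\ast}^{n+1}\leqslant\pi$, a contradiction. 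The lower bound $\beta_i^{n+1}\geqslant 0$ follows by the symmetric argument at a minimum index, using $G_i^n\geqslant 0$.

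The main technical obstacle is the sign control on $A_i^{n+1}$: the jump-induced drift correction $-\lambda\Sigma_i^{n+1}/\Delta_{x}$ can be of either sign depending on the state-dependent jump amplitude $a(\tau,x,y)$, and half of the diffusion contribution $\sigma_i^2/(2(\Delta_{x})^2)$ must dominate its negative part. The hypothesis $-\Sigma^\ast\Delta_{x}\leqslant\sigma^{\ast 2}/(2\lambda)$ is precisely what turns the tridiagonal matrix into an $M$-matrix, simultaneously delivering invertibility (and thus existence and uniqueness) and the non-positivity of off-diagonal entries required for the discrete maximum principle.
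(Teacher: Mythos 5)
Your proposal is correct and follows essentially the same route as the paper: the same rewriting of the scheme as a tridiagonal linear system with non-positive off-diagonal entries under the stated grid condition, invertibility via strict diagonal dominance, and a discrete maximum principle obtained by contradiction at an extremal interior index, all organized as an induction in $n$. The only cosmetic differences are that you argue at the maximum for the upper bound $\pi$ where the paper argues at the minimum for the lower bound $0$, and that you make the joint induction (solvability together with the bounds, which is needed to keep $r(\beta_i^n)\geqslant 0$) slightly more explicit.
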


This proposition ensures that the discretized problem is well posed. In particular, it shows that the allowance price process is always positive and does not exceed the level of penalty fixed by the regulator at any time during the compliance period.

\begin{proof}
We follow the steps in the proof from \cite{Cont05}.
For $n\in\{0,\hdots,M-1\}$ and $i\in\{0,\hdots, N-1\}$, equation (\ref{discrete_eq}) can be rewritten as
\begin{eqnarray}\label{newproblem} \lefteqn{
-F(\beta_i^n)\Delta_{\tau} \beta_{i-1}^{n+1} + (1+G_{i}^{n}(\beta_i^n)\Delta_{\tau})\beta_i^{n+1} - H_{i}^{n}\Delta_{\tau}\beta_{i+1}^{n+1} }
\qquad\qquad \qquad  \qquad\\
& & \qquad \qquad \qquad
=  \beta_i^n + \lambda\Delta_{\tau} \sum\limits_{j=J_l}^{J_r} \nu_j \beta^n_{j^\ast(i,j,n)}, \nonumber
\end{eqnarray}
where
\begin{align}
F_{i}^{n}(\beta_i^n) &= \frac{\sigma^2(\tau_{n+1},x_i)}{2(\Delta_{x})^2} + \frac{r(\beta_i^n)}{\Delta_{x}} + \frac{\lambda \Sigma_i^{n+1}}{\Delta_{x}} , \label{coeff1}\\
G_{i}^{n}(\beta_i^n) &= \frac{\sigma^2(\tau_{n+1},x_i)}{(\Delta_{x})^2} + \frac{r(\beta_i^n)}{\Delta_{x}} + \frac{\lambda \Sigma_i^{n+1}}{\Delta_{x}} +\lambda , \label{coeff2}\\
H_{i}^{n}&= \frac{\sigma^2(\tau_{n+1},x_i)}{2(\Delta_{x})^2}. \label{coeff3}
\end{align}This is a linear system and can be written as
\begin{equation}\label{linearsyst}
\mathbf{M}(n)\mathbb{\beta}^{n+1} = \mathbf{y}_n \quad \hbox{for all }\quad n=0,\hdots,M-1,
\end{equation}
where $\mathbf{\beta}^{n+1}=(\beta_0^{n+1},\hdots,\beta_{N-1}^{n+1})\in\R^N$, $\mathbf{y}_n=(y_0^n,\hdots,y_{N-1}^n)\in\R^N$ with components
\[
y^{n}_{i}= \beta_i^n + \lambda\Delta_{\tau} \sum\limits_{j=J_l}^{J_r} \nu_j \beta^n_{j^\ast(i,j,n)},
\qquad i=0, \dots, N-2
\]
and
\[
y^{n}_{N-1}=\beta_{N-1}^n + \lambda\Delta_{\tau} \sum\limits_{j=J_l}^{J_r} \nu_j \beta^n_{j^\ast(N-1,j,n)}
+ H_{N-1}^{n}\Delta_{\tau} \pi
\]
 given by the right hand side of (\ref{newproblem}).
 Given $\beta^{n}$, the matrix $\mathbf{M}(n)\in\R^{N\times N}$ is tridiagonal: the elements on
its main diagonal are the terms $(1+G_{i}^{n}(\beta_i^n)\Delta_{\tau})$, the elements of the fist diagonal above it are given by $-H_{i}^{n}\Delta_{\tau}$ and the elements of the fist diagonal below the main diagonal are given by $-F_{i}^{n}(\beta_i^n)\Delta_{\tau}$. Furthermore,  $\mathbf{M}(n)$ is diagonally dominant, which can be seen  from the relation
$$
\hbox{$G_{i}^{n}(\beta_i^n)=F_{i}^{n}(\beta_i^n)+H_{i}^{n}+\lambda\quad$ for $\quad i=0, \dots, N-1$
and $n=0, \dots M$ }
$$
and the non-negativity of the coefficients (\ref{coeff1})--(\ref{coeff3}).
Therefore, given $\beta^{n}$, the linear system (\ref{linearsyst}) possesses a unique solution $\beta^{n+1}$, see \cite{Lancaster}.
Hence the existence and uniqueness of the solution to  (\ref{discrete_eq}) follows by induction.

Now we will use induction in $n$ to show that the maximum principle holds. We will only prove that $\beta_i^n$ are non-negative, as the argument can be easily adapted to prove that the values $\beta_i^n$ remain bounded by $\pi$.

We want to show that, for any $\Delta_{\tau}>0$ and $\Delta_{x}>0$,
\[
\beta_i^n\geqslant 0 \quad\mbox{ for all }\quad i\in\Z \mbox{ and } n\in\{0,\hdots,M-1\}.
\]
For $n=0$, this is obvious from the shape of the boundary condition (\ref{pidebc}) and the definition of $g$. For induction step, assume that $\beta^n_i\geqslant 0$ for all $i\in\Z$, but there exists an $i_0\in\Z$ such that $\beta_{i_0}^{n+1}<0$. By definition of $g$, $i_0\in\{0,\hdots,N-1\}$ since we would have $\beta_{i_0}^{n+1}=g(x_{i_0})\geqslant 0$ otherwise. One can choose $i_0$ such that
\[
\beta_{i_0}^{n+1}=\min_{i\in\{0,\hdots,N-1\}} \beta_i^{n+1}<0.
\]
Under assumption (\ref{cond}), $F_{i}^{n}$, $G_{i}^{n}$ and $H_{i}^{n}$ are all non-negative.
It follows from $G_{i}^{n}(\beta_i^n)=F_{i}^{n}(\beta_i^n)+H_{i}^{n}+\lambda$ that
\begin{align*}
\beta_{i_0}^{n+1}&= -F_{i}^{n}(\beta_{i_0}^n)\Delta_{\tau} \beta_{i_0}^{n+1} + (1+G_{i}^{n}(\beta_{i_0}^n)\Delta_{\tau})\beta_{i_0}^{n+1} - H_{i}^{n}\Delta_{\tau}\beta_{i_0}^{n+1} - \lambda\Delta_{\tau} \beta_{i_0}^{n+1}\\
&\geqslant -F_{i}^{n}(\beta_{i_0}^n)\Delta_{\tau} \beta_{i_0-1}^{n+1} + (1+G_{i}^{n}(\beta_{i_0}^n)\Delta_{\tau})\beta_{i_0}^{n+1} - H_{i}^{n}\Delta_{\tau}\beta_{i_0+1}^{n+1}\\
&= \beta_{i_0}^n + \lambda\Delta_{\tau} \sum\limits_{j=J_l}^{J_r} \nu_j \beta^n_{j^\ast(i_0,j,n)}\geqslant 0,
\end{align*}
which is a contradiction. The proposition is proved.
\end{proof}

In conclusion of this section, we give a method for choosing the domain boundary $l$ and provide a numerical illustration. Let us focus on a special case where  $a(\tau,x,y)=y$ holds for all $y \in \rr$, which corresponds to a compensated compound Poisson process.
Under this assumption, $\{X_t\}_{t \in [0, T]}$ follows a jump-diffusion process given by
\begin{equation*}\label{jumpdiffusion}
X_t=X_{0}  -\int_0^t r(\alpha(s,X_s))ds + \int_0^t \sigma(s,X_s)dW_s + \sum\limits_{j=1}^{N_t} Y_i - \lambda \E(Y_1)t,
\end{equation*}
where $\{N_t\}_{t \in [0, \infty[}$ stands for a Poisson process with intensity $\lambda$ and $\{Y_i\}_{i \in \nn}$ is a sequence of independent identically distributed random variables following the same $\nu = N(0, 1)$, all the components $\{W_t\}$, $\{N_t\}$ and $\{Y_i\}$ of the model being independent of each other. Denote the compensated jump component by
\[
J_t := \sum_{j=1}^{N_t} Y_i - \lambda \E(Y_1)t = \sum_{j=1}^{N_t} Y_i
\]
and the martingale part by
$M_t := \int_0^t \sigma(s,X_s)dW_s +J_t$.
Suppose for simplicity that $X_0=0$.

The reduction function $r$ is non-decreasing and the non-negative function $\alpha(\cdot,\cdot)$ is bounded by $\pi$ since it satisfies the maximum principle. Thus, for the drift term we have
\begin{equation}\label{ineq}
0\leqslant \int_0^t r(\alpha(s,X_s))ds \leqslant  r(\pi)t.
\end{equation}
Next, we will use Kolmogorov-Doob inequality to bound the martingale part of $\{X_t\}$, for which we need a bound for the second moment of $M_t$.
Observe that the covariance between the diffusion and jump terms is given by
\begin{equation}\label{cross}
\E\left( J_t \int_0^t \sigma(s,X_s)dW_s  \right)  = \E \left[ J_t \E \left(\int_0^t \sigma(s,X_s)dW_s \Big{|} \{T_j, Y_j\} \right) \right],
\end{equation}
where $T_j$ is the time of the $j$-th jump $Y_j$. Considering the inner expectation given $\{T_j=t_j, Y_j=y_j\}$,
one can easily see that it is equal to zero, and hence the expected value (\ref{cross}) is zero as well. It follows that
\[
\E M_t^2 = \E\left( \int_0^t \sigma(s,X_s)dW_s \right)^2 + \E J_t^2.
\]
Clearly $\E (J_t^2) = \lambda\E Y_1^2 t$. Next, assume that $\sigma$ satisfies the standard linear growth condition
\begin{equation*}
\sigma^2(s,x)\leqslant a + bx^2, \qquad a,b\geqslant 0, \qquad x\in\R.
\end{equation*}
Then
\begin{align*}
\E\left( \int_0^t \sigma(s,X_s)dW_s \right)^2 = \int_0^t \E \sigma^2(s,X_s) ds &\leqslant \int_0^t \E\left(a+b X_s^2\right) ds \\
& = a t +  b \int_0^t \E X_s^2 ds,
\end{align*}
and so we have
\[
\E M_t^2 \leqslant \lambda\E Y_1^2 t + a t +  b \int_0^t \E X_s^2 ds.
\]
Since we see from (\ref{ineq}) that
\begin{equation}\label{ineqMX}
M_t - r(\pi)t \leqslant X_t \leqslant M_t
\end{equation}
and $\E M_t =0$, one has
\[
\E X_t^2 \leqslant r^2(\pi)t^2 + \E M_t^2 \leqslant (\lambda\E Y_1^2 + a)t + r^2(\pi)t^2 +  b \int_0^t \E X_s^2 ds.
\]
Hence Gronwall's lemma (see (2.10) and (2.11), Chapter 5 in \cite{KaratzsasShreve1}) applied to $\{X_t\}$ provides the upper bound
\begin{equation}\label{gronwall}
\E X_t^2 \leqslant \frac{1}{b}\left(a+\lambda\E Y_1^2 +\frac{2r^2(\pi)}{b}\right)e^{bt}-\frac{1}{b}\left(a+\lambda\E Y_1^2+2r^2(\pi)t+\frac{2r^2(\pi)}{b}\right)=:\kappa_t.
\end{equation}
The Kolmogorov-Doob inequality applied to the submartingale $\{-X_t\}$ together with (\ref{gronwall}) yields
\begin{equation*}
\Pro \left( \inf_{0\leqslant t \leqslant T} X_t \leqslant -l \right) =
\Pro \left( \sup_{0\leqslant t \leqslant T} (-X_t) \geqslant l\right)
\leqslant \kappa_T l^{-2} =:\epsilon_1
\end{equation*}
for all $l>0$. Noting that
\[
\E M_t^2 \leqslant \kappa_t + r^2(\pi)t^2 =: \zeta_t,
\]
one obtains similarly from the right inequality in (\ref{ineqMX}) that
\begin{equation*}
\Pro \left( \sup_{0\leqslant t \leqslant T} X_t \geqslant l \right) \leqslant
\Pro \left( \sup_{0\leqslant t \leqslant T} M_t \geqslant l\right)
\leqslant \zeta_T l^{-2} =:\epsilon_2,
\end{equation*}
which provides a way to determine the domain of truncation $[-l, l]$ for our problem for given accuracy levels $\epsilon_1$, $\epsilon_2$.

Suppose that the time unit is one year and assume that the time to compliance is $T=1$. Set the penalty level at $\pi=1$
and take  for simplicity a constant function
$\sigma(t, x)\equiv\sigma$ in the diffusion term. We consider $\sigma=1$ here.
The compound Poisson process is realized with intensity rate $\lambda=1$ and a standard normal jump distribution
$\nu=N(0, 1)$. Finally, we suppose that the cumulative reduction function $r$
is linear: $r(a)=a$ for all $a \in [0, \pi]$. In this case, for $\epsilon_1=\epsilon_2=0.05$, we obtain the domain truncation boundary $l\approx 11$.

\begin{figure}
\centering
    \includegraphics[width=10.5cm, height=5cm]{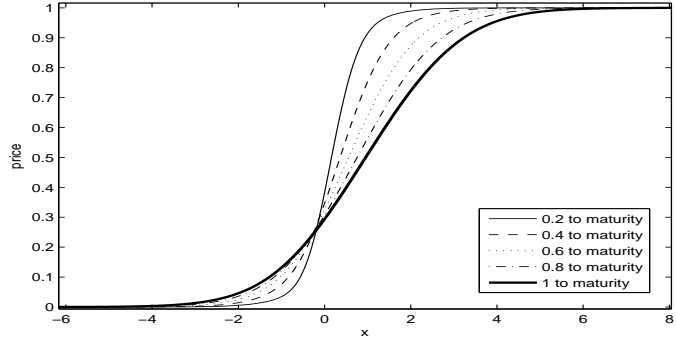}
    \caption{Function $\alpha(t, \cdot)$ for $t=0.2,0.4, 0.6, 0.8, 1$ calculated by discretization (\ref{newproblem}) with parameters given in the text.}
    \label{fig:PriceProcess}
\end{figure}
\begin{figure}
\centering
    \includegraphics[width=10.5cm, height=5cm]{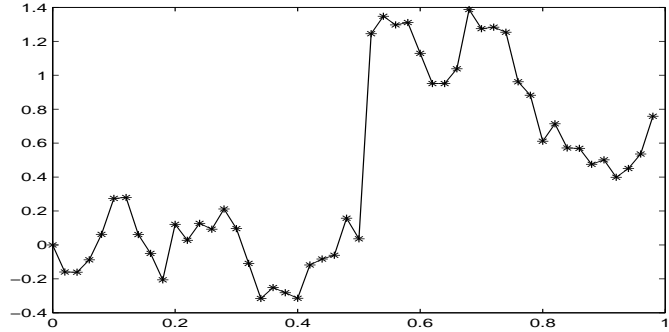}
    \caption{A typical path of the  jump-diffusion process $\{X_{t}\}_{t \in [0, T]}$.}
    \label{fig:PriceProcess1}
\end{figure}
\begin{figure}
\centering
    \includegraphics[width=10.5cm, height=5cm]{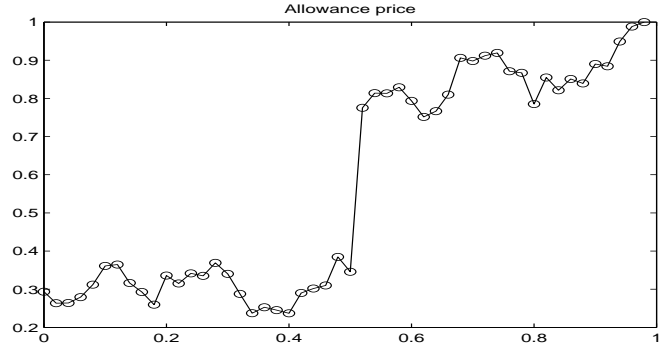}
    \caption{The corresponding realization of the  allowance price process $\{A_t=\alpha(t, X_{t})\}_{t \in [0, T]}$ for the $\{X_{t}\}_{t \in [0, T]}$ obtained in Figure \ref{fig:PriceProcess1} .}
    \label{fig:PriceProcess2}
\end{figure}

We consider a sequence of discretization schemes  with the
time step  $\Delta_{t}=0.02$ which gives $M=50$ time points on the grid.
Choosing the same space discretization $\Delta_{x}=0.02$, we obtain  $N=2l/0.02=100 l$  space points, depending on the truncation $[-l, l]$ of the space region.
To control the error made by the truncation, the $L_1$-norm of the difference between two solutions $\beta_{i}^{n}(l)$,  $\beta_{i}^{n}(\tilde l)$ corresponding to the truncations at levels $\pm l$ and $\pm \tilde l$, respectively, is considered:
\begin{equation*}
\sum_{i}\sum_{n=0}^{M} |\beta_i^n(l) - {\beta}_i^n(\tilde {l})|, \qquad  \label{error}
l< \tilde l,
\end{equation*}
where the summation is taken over all grid points on a fixed  sub-region $[0, T] \times [-d,d]$ with $d<l<\tilde l$. Numerical experiments show that, for $d=11$ and $l=20$,  $\tilde l=30$, the error is of order $10^{-7}$. We therefore conclude that the solution $\{\beta_i^n\}_{i,n}$ calculated for the space region  $[-30,30]$ is accurate.
Figure \ref{fig:PriceProcess}  shows  the shape of the  function $\alpha$  obtained in this way.
The graph in Figure \ref{fig:PriceProcess1} depicts  a typical  realization of the process $\{X_t\}_{t \in [0, T]}$,
calculated using  the forward Euler method \cite{KloeP}. Figure \ref{fig:PriceProcess2}   shows
the corresponding realization of the price process $\{A_t=\alpha(t,X_{t})\}_{t \in [0, T]}$.

Figure \ref{fig:Comparison} displays the function $\alpha$ at time $t=0.8$ obtained for various volatility levels $\sigma$ and jump rates $\lambda$. It turns out that, when the volatility is small, the price process tends faster to its boundary value: agents do not expect big changes in pollution emissions and therefore in the allowance price. If there is a shortage (excess) in allowance credits before maturity, the market expects that there will also be a shortage (excess) at $t=T$. In the presence of jumps or a large diffusion coefficient, the allowance price tends to converge slower as one can expect a sudden increase or decrease in pollution emissions which impacts directly on the allowance price.

\begin{figure}
\centering
    \includegraphics[width=10.5cm, height=5cm]{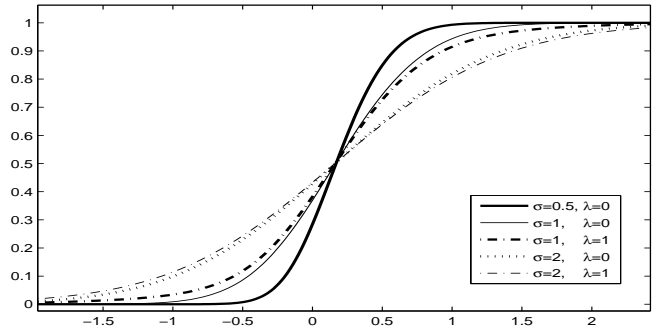}
    \caption{Function $\alpha(0.8, \cdot)$ for different values of $\sigma$ and $\lambda$.}
    \label{fig:Comparison}
\end{figure}

\section{Conclusions}
The growing evidence of the cost of climate change justifies  the introduction
of large-scale  measures. Emission trading schemes become increasingly important.
The  proposed operation of newly designed  cap-and-trade
mechanisms within a  multi-period setting and with inter-connection to other markets
 adds a notable  complexity to such systems,
and so quantitative understanding of emission trading schemes becomes
increasingly challenging. For instance, the problems of  market design,
the emission reduction performance, the optimization of  allowance allocation procedures,
the individual risk management and the valuation of emission-related financial instruments
need to be addressed within a sound mathematical framework, which we aim to
 approach in this work.
 Based on results from equilibrium analysis,  we focus on  the simplest situation of a
 one-period market to show how  the risk-neutral
evolution of emission allowances can be described in terms of jump-diffusion processes.
Although the resulting partial integro differential equations are non-linear,
we provide numerically stable and fast valuation procedures which yield reliable numerical
 schemes for valuation of derivatives of the fast-growing family of emission-related financial assets.


\end{document}